\theoremstyle{plain}
\newtheorem{theorem}{Theorem}
\newtheorem{lemma}{Lemma}[section]
\newtheorem{claim}[lemma]{Claim}
\newtheorem{definition}[lemma]{Definition}
\newtheorem{remark}[lemma]{Remark}
\newtheorem*{claim*}{Claim}
\newtheorem*{proposition*}{Proposition}
\newtheorem*{lemma*}{Lemma}
\newtheorem*{problem*}{Problem}
\title{Towards Non-Uniform $k$-Center with Constant Types of Radii\thanks{Supported by the Swiss National Science Foundation project 200021-184656 ``Randomness in Problem Instances and Randomized Algorithms.''}}
\author[1]{Xinrui Jia}
\author[1]{Lars Rohwedder}
\author[1]{Kshiteej Sheth\textsuperscript{(\Letter)}}
\author[1]{Ola Svensson}
\affil[1]{School of Computer and Communication Sciences, EPFL, Switzerland}
\affil[ ]{\texttt{\{xinrui.jia,lars.rohwedder,kshiteej.sheth,ola.svensson\}@epfl.ch}}
\begin{document}
\maketitle
\begin{abstract}
In the Non-Uniform $k$-Center problem we need to cover a finite metric space using $k$
balls of different radii that can be scaled uniformly. The goal is to minimize the scaling factor.  If the number of different radii is unbounded, the problem does not admit a constant-factor approximation
algorithm but
it has been conjectured that such an algorithm exists if the number of radii is constant.
Yet, this is known only for the case of two radii.
Our first contribution is a simple black box reduction which shows that if one can handle the variant of $t-1$ radii with outliers, then
one can also handle $t$ radii.
Together with an algorithm by Chakrabarty and Negahbani for two radii with outliers, this immediately implies a constant-factor approximation algorithm
for three radii; thus making further progress on the conjecture. Furthermore, using algorithms for the $k$-center with outliers problem, that is the one radii with outliers case, we also get a simple algorithm for two radii.

The algorithm by Chakrabarty and Negahbani uses a top-down approach, starting with the larger radius and then proceeding to the smaller one.
Our reduction, on the other hand, looks only at the smallest radius and eliminates it, which suggests that a bottom-up approach is promising.
In this spirit, we devise a modification of the Chakrabarty and Negahbani algorithm which runs in a bottom-up fashion, and in this way we recover their result
with the advantage of having a simpler analysis.

\end{abstract}
\newpage
\section{Introduction}\label{Sec:intro}

Clustering is a classic topic in algorithms and theoretical computer science.
The $k$-Center problem~\cite{gonzalez1985clustering ,hochbaum1985best} is a well-studied formulation of clustering, where one wants to cover points
in a metric space with balls of minimum radius around $k$ of them.
This problem has been investigated under multiple generalizations such as with outliers~\cite{charikermoses01} and multiple color classes~\cite{DBLP:conf/esa/Bandyapadhyay0P19, jia2020fair, anegg2021technique}.
From the viewpoint of $k$-Center being a location and routing problem, classic $k$-Center represents minimizing the maximum service time, assuming the speed of service is uniform at all locations.
Chakrabarty, Goyal, and Krishnaswamy (CGK)~\cite{chakrabarty2016non} introduce the \emph{Non-Uniform $k$-Center} problem to capture varying speeds at different locations.
In other words, the $k$ balls come with radii of varying sizes.
The formal definition is as follows.
\begin{definition}[The $t$-Non-Uniform $k$-Center Problem ($t$-NU$k$C)]
\label{nukcdefn}{\rm
The input is a metric space $(X, d)$ and radii $r_1 \geq r_2 \geq \cdots \geq r_t$ with $k_i$ balls of radius $r_i$.
The objective is to find $k_i$ centers $C_i \subseteq X$, $i=1,\dotsc,t$,
so that balls of radius $\alpha r_i$ around $C_i$, $i=1,\dotsc,t$, cover all of the points in $X$ and $\alpha$ is minimized.}
\end{definition}

The \textit{Robust} $t$-NU$k$C problem is a generalization of $t$-NU$k$C to incorporate the case of outliers, i.e. where one needs to cover only a certain number of points.

\begin{definition}[Robust $t$-NU$k$C]
\label{nukcoutliersdefn}
{\rm
This problem is the same as the $t$-NU$k$C problem except for an extra parameter $m$ and one needs to cover only $m$ many of the points in $X$.}
\end{definition}
It is easy to observe that Robust $(t-1)$-NU$k$C is a special case of $t$-NU$k$C with $|X|-m$ balls of radius 0. In \cite{chakrabarty2016non}, the authors gave a $(1+\sqrt{5})$-approximation for 2-NU$k$C and a 2-approximation for Robust 1-NU$k$C. They also showed that no constant-factor approximation is possible when $t$ is part of the input, assuming $\P \neq \NP$. Further, the authors conjectured an $O(1)$-approximation to be possible when $t=O(1)$. Recently, Chakrabarty and Negahbani (CN) \cite{DBLP:conf/ipco/ChakrabartyN21} obtained an important result making progress towards this conjecture. They obtained a 10-approximation for Robust 2-NU$k$C, which is a special case of 3-NU$k$C. However, their techniques do not seem to extend for 3-NU$k$C and they state in their paper that new ideas would be needed to make further progress. We show a simple reduction in this paper from $t$-NU$k$C to Robust $(t-1)$-NU$k$C for all $t$ that loses only a constant factor in the approximation guarantee. This together with the algorithm of \cite{DBLP:conf/ipco/ChakrabartyN21} for Robust 2-NU$k$C implies a simple constant-approximation for 3-NU$k$C.
\begin{theorem}\label{tradiimain}
If there is an $\alpha$-approximation for Robust $(t-1)$-NU$k$C, then there is a $(2\alpha+2)$-approximation for $t$-NU$k$C.
\end{theorem}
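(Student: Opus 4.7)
The plan is to reduce a $t$-NU$k$C instance to a Robust $(t-1)$-NU$k$C instance by eliminating the smallest radius $r_t$. First, by the standard trick of enumerating over the $O(n^2)$ distinct pairwise distances, I would guess the optimal scaling $\alpha^*$ of the given instance and rescale so that $\alpha^* = 1$; the approximation ratio we output is then compared against this value.

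Next, I would greedily construct a maximal $2r_t$-packing $M \subseteq X$, i.e., a set with pairwise distances $> 2 r_t$ that cannot be extended. Two properties follow from maximality: (i) $M$ is a $2r_t$-net, so every $x \in X$ admits some $m \in M$ with $d(x,m) \leq 2 r_t$; and (ii) any ball of radius $r_t$ has diameter at most $2 r_t$ and hence contains at most one point of $M$. Consequently, at most $k_t$ points of $M$ are served exclusively by radius-$r_t$ balls in the $t$-NU$k$C optimum.

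I would then form a Robust $(t-1)$-NU$k$C instance on the metric induced on $M$, with radii $r_1, \ldots, r_{t-1}$, multiplicities $k_1, \ldots, k_{t-1}$, and coverage requirement $m := \max(|M| - k_t, 0)$. By property~(ii), the larger-radius centers $C_1, \ldots, C_{t-1}$ from the $t$-NU$k$C optimum already give a feasible scaling-$1$ solution to this robust instance, since the at most $k_t$ points of $M$ they fail to cover can be declared outliers. Invoking the assumed $\alpha$-approximation for Robust $(t-1)$-NU$k$C therefore returns centers $C_1', \ldots, C_{t-1}'$ with scaling at most~$\alpha$, leaving some set $M^* \subseteq M$ with $|M^*| \leq k_t$ uncovered.

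The final solution is $C_1', \ldots, C_{t-1}'$ together with $C_t := M^*$, padded to exactly $k_t$ centers if needed. To verify coverage, pick any $x \in X$ and choose $m \in M$ with $d(x, m) \leq 2 r_t$. If $m \in M^*$ then $d(x, C_t) \leq 2 r_t \leq (2\alpha+2) r_t$; otherwise $d(m, C_i') \leq \alpha r_i$ for some $i \leq t-1$, giving $d(x, C_i') \leq 2 r_t + \alpha r_i \leq (2\alpha+2) r_i$ since $r_t \leq r_i$. This yields the promised $(2\alpha+2)$-approximation. The main point to get right is the structural claim in property~(ii)---that the $2r_t$-packing lets us bound the uncovered points of $M$ by exactly the smallest-radius budget $k_t$---which is what makes the reduction go through cleanly; the guess of $\alpha^*$ is standard bookkeeping.
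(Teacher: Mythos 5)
Your overall strategy is the same as the paper's (a greedy maximal $2r_t$-packing $M$, the observation that each radius-$r_t$ ball hits at most one point of $M$, recursion on a robust instance over $M$, and a triangle-inequality lift at the end), but there is a genuine gap in the feasibility claim for the reduced instance. You define the Robust $(t-1)$-NU$k$C instance on the metric induced on $M$ with the \emph{original} radii $r_1,\dots,r_{t-1}$, and then assert that the optimal centers $C_1,\dots,C_{t-1}$ of the $t$-NU$k$C instance witness feasibility at scaling $1$. They do not: in the induced instance the centers must be chosen from $M$, whereas the optimal centers live in $X$ and may lie in $X\setminus M$ (an optimal center can be swallowed into the cluster of some packing point and disappear from $M$). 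Concretely, take $t=2$, points $p,a,b,c,s$ with $d(p,a)=d(p,b)=d(p,c)=r_1$, $d(p,s)=2r_2$, remaining distances given by shortest paths, $k_1=1$, $k_2=1$, and $r_2<r_1/2$. The ball $\mathcal{B}(p,r_1)$ covers everything, so the instance is feasible; but greedy may produce $M=\{s,a,b,c\}$, where the pairwise distances are $2r_1$ or $r_1+2r_2$, so no radius-$r_1$ ball centered \emph{in} $M$ covers more than one point of $M$. Your reduced instance, which demands covering $|M|-k_2=3$ points, is then infeasible, and the $\alpha$-approximation you invoke is entitled to report failure.

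The fix is exactly the paper's relocation step: each optimal ball of radius $r_i$, $i\le t-1$, that covers some point $v\in M$ is re-centered at $v$ and its radius doubled to $2r_i$; it then still covers every point of $M$ it covered before. Hence the reduced instance must be stated with radii $2r_1,\dots,2r_{t-1}$. Running the $\alpha$-approximation then yields balls of radius $2\alpha r_i$, and the final lift gives $2\alpha r_i+2r_t\le(2\alpha+2)r_i$ --- which is where the constant $2\alpha+2$ actually comes from. Your final inequality $\alpha r_i+2r_t\le(2\alpha+2)r_i$ is loose enough to hide the missing factor of $2$, but it does not repair the argument, since it is applied to the output of an algorithm run on a possibly infeasible instance. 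The rest of your write-up (the bound of $k_t$ on the points of $M$ served only by radius-$r_t$ balls, the handling of $|M|\le k_t$ via $m=\max(|M|-k_t,0)$, and the triangle-inequality assembly) is correct and matches the paper.
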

Thus, the 10-approximate algorithm of \cite{DBLP:conf/ipco/ChakrabartyN21} for Robust 2-NU$k$C implies a 22-approximate algorithm for 3-NU$k$C. Since no constant approximation was known when $t\geq 3$, this makes further progress on the conjecture of \cite{chakrabarty2016non}. We also note that using the 2-approximation algorithm of \cite{chakrabarty2016non} or \cite{DBLP:conf/esa/Bandyapadhyay0P19} for $k$-center with outliers, Theorem~\ref{tradiimain} also gives a simpler alternate 6-approximation algorithm for 2-NU$k$C as compared to the algorithm of \cite{chakrabarty2016non}.

\paragraph*{Comparison of previous work and our approach.} We briefly describe the approach of \cite{DBLP:conf/ipco/ChakrabartyN21} and compare it with ours. The 10-approximation algorithm of \cite{DBLP:conf/ipco/ChakrabartyN21} for Robust 2-NU$k$C uses a connection to the \emph{firefighter on trees} problem initially developed in \cite{chakrabarty2016non} and employs a multi-layered \emph{round-or-cut} procedure using the ellipsoid algorithm. Given a fractional solution $x$ to an instance of Robust 2-NU$k$C, \cite{DBLP:conf/ipco/ChakrabartyN21} obtains an instance of a 2-layered firefighter problem with a corresponding fractional solution $y$. This firefighter instance has the property that an integral solution to it would imply an approximate solution to the initial Robust 2-NU$k$C instance. The top layer in the firefighter instance corresponds to potential centers for balls of the larger radius and the bottom layer corresponds to potential centers for balls of the smaller radius. They show that if $y$ does not put too much mass on vertices of the top layer then one can easily obtain an integral solution to the firefighter instance, which in turn gives an approximate solution for the original Robust 2-NU$k$C instance. In the other case, they show that if there is an integral solution that puts a lot of mass on the top layer, then one can reduce the original instance to a \emph{well-separated} instance with respect to to balls of the larger radii, that is, balls of the larger radii are only allowed to be placed on a specified set of points that have large pairwise distance. If this instance is infeasible, then it implies that every integral solution of the original instance does not put too much mass on the top layer of the firefighter instance. This can be used to obtain a linear inequality violated by $x$ but satisfied by every integral solution of the original instance, which is then fed back to the ellipsoid algorithm in the round-or-cut framework. \cite{DBLP:conf/ipco/ChakrabartyN21} then designs an algorithm that either returns an approximate solution for a well-separated instance with respect to the larger radius or proves that the instance is infeasible, by exploiting the fact that balls of the larger radius do not intersect and interact as they are only allowed to be centered on points that have large pairwise distance. This algorithm is again based on the round-or-cut framework.

The algorithm of \cite{DBLP:conf/ipco/ChakrabartyN21} proceeds in a top-down fashion in the following sense: They first reduce a general instance to a well-separated one with respect to the larger radius and then proceed by solving such an instance. On the contrary, our reduction that gives Theorem~\ref{tradiimain} is bottom-up. Given an instance of $t$-NU$k$C, we greedily partition the metric space into clusters of radius two times the smallest radius. These clusters are disjoint and thus the centers of these clusters are well-separated with respect to the smaller radius. This allows us to throw away information about the smallest radius and all points except the centers of these clusters to obtain an instance of Robust $(t-1)$-NU$k$C, i.e., one type of radius is eliminated. In Section~3, we show how we can also obtain a 10-approximation for Robust 2-NU$k$C that works in a bottom-up fashion as compared to the top-down approach of \cite{DBLP:conf/ipco/ChakrabartyN21}. We also use a multi-layered round-or-cut approach. In our outer layer of the round-or-cut framework using the ellipsoid algorithm we reduce a general instance to an instance where balls of the smaller radius do not interact. Then using another layer of round-or-cut we reduce such a structured instance to a well-separated instance with respect to the larger radius. We observe that such an extremely structured instance is a \emph{laminar instance} that can be solved using standard dynamic programming techniques. Our approach can be viewed as a bottom-up implementation of the algorithm of \cite{DBLP:conf/ipco/ChakrabartyN21} with a simpler analysis. In particular, we do not need to prove Lemma 4 in \cite{DBLP:conf/ipco/ChakrabartyN21}, which essentially argues that if the firefighter instance obtained from a well-separated Robust 2-NU$k$C instance with respect to the larger radius does not have an integral solution, then a certain linear inequality serves as a separating inequality. 

Using the bottom-up view rather than a top-down one, we are able to obtain a simple reduction from $t$-NU$k$C to Robust $(t-1)$-NU$k$C which in turn implies a simple constant approximation for 3-NU$k$C, thus making progress on the conjecture of \cite{chakrabarty2016non}. Secondly, with this view we are also able to design a bottom-up implementation of the 10-approximation algorithm of \cite{DBLP:conf/ipco/ChakrabartyN21} for Robust 2-NU$k$C that has a simpler analysis.

\paragraph*{Preliminaries and Notation.} In our problem we are given a metric space $(X,d)$ where $X$ is a finite set of points and $d:X\times X \rightarrow \mathbb{R}_{+}$ is a distance function that satisfies the triangle inequality. For any $v\in X$, $U\subseteq X$ and $r\geq 0$ we let $\mathcal{B}_{U}(v,r)=\{u\in U : d(u,v)\leq r\}$. By $\mathcal{B}(v,r)$ we mean $\mathcal{B}_{X}(v,r)$ and we refer to it as the ball of radius $r$ centered at $v$. For any vector $x\in \mathbb{R}^{|X|}$ and set $S\subseteq X$ we write $x(S) = \sum_{v\in S}x_{v}$. We will work with the approximate feasibility versions of the problems defined in Definitions \ref{nukcdefn} and \ref{nukcoutliersdefn}. Our algorithms for these problems will either output that the input instance is infeasible, that is there is no solution with $\alpha=1$, or output a feasible solution with some $\alpha \leq \alpha^*$. Using binary search, such an algorithm would imply an $\alpha^*$-approximation for the $t$-NU$k$C and Robust $(t-1)$-NU$k$C. Thus in this paper, by a \emph{feasible instance} of $t$-NU$k$C and Robust $(t-1)$-NU$k$C we mean an instance that has a feasible solution with $\alpha=1$.

\section{Reducing $t$-NU$k$C to Robust $(t-1)$-NU$k$C}\label{Sec:main-thm}
In this section we present our simple reduction of $t$-NU$k$C to Robust $(t-1)$-NU$k$C and its analysis, which will imply Theorem \ref{tradiimain}.
 Using Theorem \ref{tradiimain} and the recent 10-approximation algorithm for Robust 2-NU$k$C obtained in~\cite{DBLP:conf/ipco/ChakrabartyN21}, we obtain the following corollary.
\begin{theorem}
There is a 22-approximation algorithm for 3-NUkC.
\end{theorem}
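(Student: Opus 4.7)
The plan is to derive this as an immediate corollary of Theorem~\ref{tradiimain} combined with the 10-approximation for Robust 2-NU$k$C of Chakrabarty and Negahbani~\cite{DBLP:conf/ipco/ChakrabartyN21}. Concretely, I would instantiate Theorem~\ref{tradiimain} with $t=3$ and $\alpha = 10$, which yields a $(2\alpha + 2) = 22$-approximation for 3-NU$k$C. So there is essentially nothing new to prove here beyond invoking the two prior results in the correct order; the real content is the reduction of Section~\ref{Sec:main-thm}, which establishes Theorem~\ref{tradiimain}, and the approximation guarantee of CN, which is cited as a black-box.

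To make the statement completely self-contained, I would spend one or two sentences recalling the pipeline implicit in Theorem~\ref{tradiimain}: given a feasible 3-NU$k$C instance with radii $r_1 \geq r_2 \geq r_3$, one greedily bundles points into clusters of radius $2 r_3$ around representative centers, discards all non-representative points together with the smallest radius $r_3$, and passes the resulting instance (with $k_3$ outliers allowed, one per discarded point's cluster, controlled by the bundle counts) to the Robust 2-NU$k$C solver of~\cite{DBLP:conf/ipco/ChakrabartyN21}. Feeding the returned assignment back through the bundling step, the triangle inequality converts the CN guarantee of $10$ into a dilation of $2 \cdot 10 + 2 = 22$ on the original instance.

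Since the correctness of the reduction and the $10$-approximation are already established, there is no genuine obstacle in this corollary: the only thing to verify is that the arithmetic $(2\alpha + 2)\big|_{\alpha=10} = 22$ is correct and that Robust $(t-1)$-NU$k$C with $t=3$ is exactly the Robust 2-NU$k$C problem solved in~\cite{DBLP:conf/ipco/ChakrabartyN21}. Both are immediate from the definitions, so the proof is essentially a one-line citation chain and I would present it as such.
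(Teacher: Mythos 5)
Your proposal is correct and matches the paper exactly: the result is stated as a corollary obtained by instantiating Theorem~\ref{tradiimain} with $\alpha=10$ (the Robust 2-NU$k$C guarantee of~\cite{DBLP:conf/ipco/ChakrabartyN21}) to get $2\cdot 10+2=22$. Your recap of the underlying reduction is also the same bottom-up RadiiCompression pipeline the paper uses, so nothing further is needed.
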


We first present the algorithm for performing the reduction and then follow it with the main statement of the reduction. Then we show how to prove Theorem~\ref{tradiimain} using the reduction and we conclude the section by proving correctness of the reduction.
\vspace{0.4cm}
\\
\vspace{0.4cm}
\begin{algorithm}[H]
\SetAlgoLined
 $\textbf{Input:}$ $\mathcal{I} = ((X,d), (k_1,r_1),\ldots,(k_t,r_t))$ ,$r_1\geq \ldots \geq r_t$\;
 $\textbf{Init:}$ Set $L = \emptyset$, $U = X$ \;
 \While{$U\neq \emptyset$}{
 Let $v$ be an arbitrary point in $U$\;
 $L\leftarrow L\cup \{v\}$\;
 $\text{Child}(v) := \mathcal{B}_{U}(v,2r_t)$\;
 $U\leftarrow U\setminus \mathcal{B}_{U}(v,2r_t)$\;
 }
 
 $\textbf{Return:}$ $(L,\{\text{Child}(v)\}_{v\in L})$
 \caption{RadiiCompression}\label{RadiiCompression}
\end{algorithm}
The crucial property of the reduction is summarized in the following lemma.
\begin{lemma}\label{reduction}
Given a feasible instance $\mathcal{I} = ((X,d), (k_1,r_1),\ldots,(k_t,r_t))$ of $t$-NU$k$C, \\
RadiiCompression($\mathcal{I}$) returns $(L,\{\text{Child}(v)\}_{v\in L})$ where $L\subseteq X$ and $\{\text{Child}(v)\}_{v\in L}$ partitions $X$ such that
\begin{itemize}
    \item if $|L|\leq k_t$, then $k_t$ balls of radius $2r_t$ around points in $L$ cover $X$,
    \item otherwise, $\mathcal{I}_{reduced} = ((L,d), (k_1,2r_1),\ldots,(k_{t-1},2r_{t-1}), m=|L|-k_t)$ is a feasible instance of Robust $(t-1)$-NU$k$C.
\end{itemize}
\end{lemma}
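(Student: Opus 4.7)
The plan is to verify the two structural claims about the output and, in the non-trivial case, build a feasible solution of $\mathcal{I}_{reduced}$ directly from an integral optimum of $\mathcal{I}$.

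First I would extract two elementary invariants from Algorithm~\ref{RadiiCompression}: (i) $\{\text{Child}(v)\}_{v \in L}$ partitions $X$, since each iteration removes the newly formed child set from $U$ and the loop runs until $U = \emptyset$; and (ii) $L$ is $2r_t$-separated, i.e.\ $d(u,v) > 2r_t$ for all distinct $u,v \in L$, because when $v$ is added to $L$ the entire ball $\mathcal{B}_U(v, 2r_t)$ is removed from $U$. Since every $x \in X$ lies in some $\text{Child}(v) \subseteq \mathcal{B}(v, 2r_t)$, the $|L|$ balls of radius $2r_t$ around $L$ already cover $X$, which disposes of the first bullet when $|L| \le k_t$.

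For the second case I would fix an integral solution $C_1, \ldots, C_t$ of $\mathcal{I}$ with $\alpha = 1$ and split $L$ into $L_{\text{small}} := \{v \in L : d(v, C_t) \le r_t\}$ and $L_{\text{large}} := L \setminus L_{\text{small}}$. The key observation is that $2r_t$-separation forces every ball of radius $r_t$ to contain at most one point of $L$, so $|L_{\text{small}}| \le |C_t| = k_t$ and hence $|L_{\text{large}}| \ge |L| - k_t = m$.

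It then remains to produce $C'_1, \ldots, C'_{t-1} \subseteq L$ with $|C'_i| \le k_i$ whose radius-$2r_i$ balls cover all of $L_{\text{large}}$. For each $v \in L_{\text{large}}$ pick an index $i_v \le t-1$ and a center $p_v \in C_{i_v}$ with $d(v, p_v) \le r_{i_v}$; such a pair exists because $v$ is covered by the original solution but not by any radius-$r_t$ ball. For each $c \in C_i$ ($i \le t-1$) that equals $p_v$ for at least one $v$, choose an arbitrary representative $v'_c \in \{v \in L_{\text{large}} : p_v = c\}$ and let $C'_i$ collect these representatives. Then for any $v \in L_{\text{large}}$, writing $c := p_v$ and $i := i_v$, the triangle inequality gives $d(v, v'_c) \le d(v, c) + d(c, v'_c) \le r_i + r_i = 2r_i$, so $v$ is covered by the radius-$2r_i$ ball around $v'_c \in C'_i$. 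The main subtlety is this last mapping: the naive choice of representing $c$ by the unique point $v \in L$ with $c \in \text{Child}(v)$ introduces an extra $+2r_t$ term in the triangle inequality that the $2r_i$ budget does not absorb, so one must instead route through a point of $L$ that $c$ actually covers in the original solution, which is what makes the factor-$2$ blow-up suffice.
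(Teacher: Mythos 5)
Your proposal is correct and follows essentially the same route as the paper: bound the number of $L$-points that genuinely need the smallest radius by $k_t$ using the $2r_t$-separation of $L$, then relocate each radius-$r_i$ ball ($i\le t-1$) to a point of $L$ it actually covers and double its radius. The minor difference (defining $L_{\text{large}}$ via distance to $C_t$ rather than via a tie-broken partition of $L$ by covering index) is immaterial, and your closing remark about routing through a covered point of $L$ rather than the Child-parent is exactly the point the paper's argument relies on.
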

Before we prove this lemma, let us see how it implies Theorem~\ref{tradiimain}.
\begin{proof}[Proof of Theorem \ref{tradiimain}]
We can solve the original feasible $t$-NU$k$C instance $\mathcal{I}$ as follows. We first run RadiiCompression$(\mathcal{I})$ to obtain $(L,\{\text{Child}(v)\}_{v\in L})$. Then by applying Lemma~\ref{reduction} we either obtain a $2$-approximation to $\mathcal{I}$ (if $|L|\leq k_t$) or a feasible instance $\mathcal{I}_{reduced}$ of Robust $(t-1)$-NU$k$C which then is solved using the $\alpha$-approximation algorithm assumed to exist. The algorithm returns an $\alpha$-approximate solution, i.e. sets $C_1,\ldots,C_{t-1}\subseteq L$ with $|C_i|\leq k_i$ such that $k_i$ balls of radius $2\alpha r_i$ around points in $C_i$ for all $1\leq i\leq t-1$ cover at least $|L|-k_t$ points of $L$. We increase the radius of each of these balls from $2\alpha r_i$ to $2\alpha r_i + 2r_t$. We also open at most $k_t$ balls of radius $2r_t$ around the points in $L$ not covered. Since each point in $X$ is at distance at most $2r_t$ from some point in $L$ and every point in $L$ is either an open center or is at distance at most $2 \alpha r_i$ from an open center in $C_i$ for some $1\leq i \leq t-1$, by the triangle inequality all points in $X$ are covered. We used at most $k_i$ balls of radius $2\alpha r_i+2r_t\leq (2\alpha+2)r_i$ for each $1\leq i \leq t-1$ and at most $k_t$ balls of radius $2r_t$. Thus, we get a $(2\alpha +2)$-approximation algorithm for $t$-NU$k$C, assuming there is an $\alpha$-approximation algorithm for Robust $(t-1)$-NU$k$C.
\end{proof}
Now we present the proof of Lemma \ref{reduction}.
\begin{proof}[Proof of Lemma \ref{reduction}]
Clearly $\{\text{Child}(v)\}_{v\in L}$ partitions $X$, as otherwise the while loop would not have terminated.
If $|L|\leq k_t$, then since $\{\text{Child}(v)\}_{v\in L}$ partitions $X$ and every point in $\text{Child}(v)$ is at distance at most $2r_t$ from $v$ for all $ v\in L$, we can open $|L|\leq k_t$ balls of radius $2r_t$ around points in $L$ and cover all points in $X$.\\
For the rest of the proof we assume that $|L|>k_t$. Consider a feasible solution $C_1,\ldots,C_{t}\subseteq X$ where $|C_i|\leq k_i$ for all $1\leq i \leq t$ of $\mathcal{I}$, i.e. $k_i$ balls of radius $r_i$ around points in $C_i$ for all $1\leq i \leq t$ cover $X$. Let $L_i\subseteq L$ be the points of $L$ that are covered by $C_i$. If a point is covered by $C_i$ and $C_j$ where $i<j$ then we only include it in $L_i$. Since each point in $L$ is covered, $\{L_i\}_{i=1}^t$ partitions $L$. Note that each ball of radius $r_t$ can cover at most one point in $L_t$ as the pairwise distance between any two points in $L_t\subseteq L$ is strictly more than $2r_t$. Hence, $|L_t|\leq |C_t|\leq k_t$.

By ``slightly" moving the centers $C_1,\dotsc,C_{t-1}$ we will exhibit a solution that covers all points in $L \setminus L_{t}$.
To this end, consider some ball of radius $r_i$ centered at a point $p \in C_i$ that covers a point $v\in L_i$. Then for all $u\in L_i$ also covered by the ball around $p$
we have $d(u,v)\leq d(u,p)+d(p,v)\leq 2r_i$. Thus if we move this ball to be centered at $v$ instead of $p$ and increase the radius to $2r_i$, it will cover all the points in $L_i$ that were covered by it previously when it was centered at $p$.
Repeating this procedure for every $p\in C_i$ and every $1 \le i \le t-1$,
we obtain new centers $C'_1 \subseteq L_1,\dotsc,C'_{t-1} \subseteq L_{t-1}$.
It follows that balls of radius $2r_i$ around the centers in $C'_i$, $1 \le i \le t-1$, cover at least $\sum_{i=1}^{t-1}|L_i|=|L|-|L_t|\geq |L|-k_t$ points of $L$. This exhibits feasibility of the Robust $(t-1)$-NU$k$C instance $\mathcal{I}_{reduced}$.
\end{proof}
\section{A bottom-up algorithm for Robust 2-NU$k$C}
\label{Sec:alt}
The main result in~\cite{DBLP:conf/ipco/ChakrabartyN21} is a 10-approximation for Robust 2-NU$k$C. In this section we present an alternative, bottom-up implementation of the algorithm of~\cite{DBLP:conf/ipco/ChakrabartyN21} as briefly discussed in Section~1. The main theorem we will show in this section is the following.
\begin{theorem}\label{bottomupnukc}
There is a 10-approximation for Robust 2-NU$k$C.
\end{theorem}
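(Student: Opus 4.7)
The plan is to prove Theorem~\ref{bottomupnukc} by a two-level round-or-cut scheme that structures the instance from the smaller radius upward, culminating in a laminar dynamic program. This is the bottom-up counterpart of the top-down argument of~\cite{DBLP:conf/ipco/ChakrabartyN21}, but the analysis should be notably shorter because the separating hyperplanes are read off directly from a RadiiCompression-style greedy clustering instead of from a firefighter tree.

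\textbf{Outer round-or-cut.} I would start from the natural LP relaxation with variables $x_v^{(i)}$ for opening a ball of radius $r_i$ at $v$ and coverage variables $z_v$, subject to $\sum_v x_v^{(i)}\le k_i$, $z_v\le \sum_{u\in \mathcal{B}(v,r_i)} x_u^{(i)}$ for $i\in\{1,2\}$, and $\sum_v z_v\ge m$. Running the ellipsoid method, at each iteration I take the current fractional $x$ and apply the greedy clustering at scale $2r_2$ used in Algorithm~\ref{RadiiCompression}: iteratively pick a point $v$, set $\mathrm{Child}(v)=\mathcal{B}_U(v,2r_2)$, and add $v$ to a set $L$ of representatives whose pairwise distances exceed $2r_2$. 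The aim of this layer is to certify feasibility of a \emph{structured} instance in which small-radius balls are allowed only at centers of $L$ and cover only their children; thanks to the separation of $L$, these small balls ``do not interact.'' If the structured instance is infeasible, I extract a valid inequality separating $x$ from the structured polytope and feed it back to the ellipsoid algorithm. The triangle-inequality centering trick from the proof of Lemma~\ref{reduction} guarantees that any integral small-radius ball can be moved onto $L$ at the cost of a factor-$2$ dilation.

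\textbf{Inner round-or-cut and laminar DP.} Given an outer-structured instance I apply a second round-or-cut, this time clustering the residual fractional solution at scale $2r_1$ on top of $L$ to obtain super-representatives $L'\subseteq L$ that are pairwise more than $2r_1$ apart, again by the same greedy procedure. Once both round-or-cut loops terminate successfully, the admissible big balls centered at $L'$, the admissible small balls centered at $L$ and lying inside big balls, and the children of small balls form a two-level laminar family of candidate sets. A standard bottom-up dynamic program over this family solves the restricted problem exactly: for each $u\in L'$ enumerate the number of small balls allocated to its subtree and compute the maximum number of points covered, then combine the $|L'|$ subproblems under the global budgets $k_1,k_2,m$ by a knapsack-style merge. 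Accounting for every factor-$2$ loss (moving big-ball centers into $L'$, moving small-ball centers into $L$, and covering children through their representatives) together with the slack across the two round-or-cut layers yields the claimed $10$-approximation.

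\textbf{Main obstacle.} The crux of the proof is the design of the two separating hyperplanes: each must be a valid inequality for every integral feasible solution yet must be violated by the current fractional $x$ whenever the corresponding structural reduction fails. For the outer layer the inequality must encode that if the greedy clustering at scale $2r_2$ leaves no feasible assignment of small balls to $L$, then $x$ itself places too much mass in certain children of $L$; for the inner layer it must certify that no $k_1$ super-representatives of $L'$ can host the big balls of any integral solution. Formalizing these inequalities so that each is polynomial-time constructible from the clustering output, and verifying that the combined dilation is exactly $10$, is where the technical work lies; phrasing both reductions in metric terms rather than via the firefighter reformulation is what should let us bypass the analogue of Lemma~4 of~\cite{DBLP:conf/ipco/ChakrabartyN21}.
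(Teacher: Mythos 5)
Your overall architecture matches the paper's: a two-layer round-or-cut that first structures the instance at scale $2r_2$ so small balls live on a well-separated set, then handles the large radius and finishes with a laminar dynamic program. However, the proposal defers exactly the content that constitutes the proof. The outer cut is not specified: the paper's mechanism is to weight each representative $v\in L_2$ by $w(v)=|\mathrm{Child}(v)|$ and to run the greedy clustering in decreasing order of fractional coverage, which forces the current fractional point to satisfy $\sum_{v\in L_2}w(v)\,\mathrm{cov}(v)\ge m$; then, if the contracted instance is infeasible, every integral solution violates this weighted inequality and it is the separating hyperplane. Without the weights and the max-coverage-first selection rule there is no valid inequality to return to the ellipsoid method, and your description (``$x$ places too much mass in certain children'') does not yield one. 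Similarly, the inner layer needs the case split on $\sum_{v\in L_1}\mathrm{cov}_1(v)$ versus $k_1-2$: in one branch a CGK-style rounding already gives the answer, and in the other one must guess the single large ball that covers no point of $L_1$ and remove it before the DP applies; otherwise restricting all large balls to $L_1$ is not valid.

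Two concrete errors beyond the omissions. First, your inner clustering at scale $2r_1$ is insufficient: for laminarity you need the candidate large-ball centers to be pairwise more than $4r_1$ apart, so that the relocated balls of radius $2r_1$ are disjoint and the children sets $C(v)=\{u\in L_2:\mathcal{B}(v,2r_1)\cap\mathcal{B}(u,0)\neq\emptyset\}$ are disjoint (the paper runs CGK with $\alpha_1=4$). With separation only $2r_1$ the laminar DP's correctness breaks. Second, the dilation accounting you sketch (three factor-$2$ losses) would sum to roughly $2r_1+2r_2+2r_2\le 6r_1$, not $10r_1$; the actual factor $10$ arises as $4\cdot(2r_1)+2r_2\le 10r_1$, where the $4$ is the approximation ratio for contracted instances forced by the $\alpha_1=4,\alpha_2=2$ choice. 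So the constant does not follow from the argument as written, and fixing the separation scale changes it.
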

\paragraph*{Linear programming relaxation.} The input consists of an instance $\mathcal{I}= ((X,d),$\\$(k_1,r_1),(k_2,r_2),m)$ of Robust 2-NU$k$C with $r_1\geq r_2$. We will be working with the following natural LP formulation for the problem that we refer to as LP1.
\begin{align*}
     \text{cov}_1(v) \leq &\sum_{u \in \mathcal{B}(v,r_1)} x_{u,1} \quad \forall v \in X \\
     \text{cov}_2(v) \leq &\sum_{u \in \mathcal{B}(v,r_2)} x_{u,2} \quad \forall v \in X \\
     \text{cov}(v) := & \text{cov}_1(v)+\text{cov}_2(v)\leq 1  \quad \forall v \in X \\
     \sum_{v \in X} x_{v,1} &\leq k_1, \quad
     \sum_{v \in X} x_{v,2} \leq k_2 \\
     &\sum_{v \in X} \text{cov}(v) \geq m  \\
     x_{v,i} & \geq 0, \quad \forall v \in X, i\in\{1,2\}.\\
     \text{cov}_i(v) & \geq 0, \quad \forall v \in X, i\in\{1,2\}.
\end{align*}
For every $v\in X$, $\text{cov}_i(v)$ denotes the (fractional) amount that $v$ is covered by balls of radius $r_i$ and $x_{v,i}$ denotes the (fractional) amount that a ball of radius $r_i$ centered at $v$ is open, for $i\in \{1,2\}$.
For the instance $\mathcal{I}$, we denote by $\mathcal{P}_{\mathcal{I}}$ the convex hull of all possible integral coverages $\{(\text{cov}_1(v),\text{cov}_2(v))\}_{v\in X}$ induced by integral feasible solutions of $\mathcal{I}$.

We now proceed to the proof of Theorem \ref{bottomupnukc}. We will use the round-or-cut method on $\mathcal{P}_{\mathcal{I}}$ via the ellipsoid algorithm to solve this problem. This method was first used in \cite{carr} in the context of the minimum knapsack problem and later had been used as a successful technique in designing approximation algorithms for other problems such as clustering~\cite{an2017lp, li2017uniform, li2016approximating, chakrabarty2019generalized,anegg2021technique} and network design~\cite{chakrabarty2015approximability}. We now explain the round-or-cut method in our context. In this iterative method, we are given fractional coverages $\{(\text{cov}_1(v),\text{cov}_2(v))\}_{v\in X}$ in each iteration. Using these coverages we will either generate a 10-approximate solution to $\mathcal{I}$, or find a linear inequality violated by these coverages but satisfied by each point in $\mathcal{P}_{\mathcal{I}}$. This inequality is then fed back to the ellipsoid algorithm, which computes a new fractional solution to be
used in the next iteration. The separating hyperplanes we output will have encoding length bounded by a polynomial in $|X|$, so this procedure will terminate in a polynomial number of iterations and output an approximate solution along the way or prove that the instance is infeasible.

Recall that $\text{cov}(v) = \text{cov}_1(v)+\text{cov}_2(v)$ for all $v\in X$. First we check if $\sum_{v\in X}\text{cov}(v)\geq m$ as otherwise this inequality itself acts as a separating inequality. Now given these fractional coverages, we run the classic Hochbaum and Shmoys (HS) subroutine~\cite{hochbaum1985best} on $\mathcal{I}$. This subroutine greedily partitions $X$ into clusters of radius $r$ specified in the input. This is done by picking the point with the highest fractional coverage, removing a ball of radius $r$ around it, and repeating. A pseudocode description of this can be found in Algorithm~\ref{HScluster}. We set
\begin{equation*}
     (L_2,\{\text{Child}_2(v)\}_{v\in L_2})= \mathrm{HS}((X,d),\{\text{cov}(v)\}_{v\in X},2r_2),
\end{equation*}
where $\{\text{Child}(v)\}_{v\in L_2}$ partitions $X$ and $\text{Child}(v)$ has radius $2r_2$ for all $v\in L_2$. Let $w(v) = |\text{Child}(v)|$. The HS subroutine guarantees that $\text{cov}(v)\geq \text{cov}(u)$ for all $v\in L_2, u\in \text{Child}(v)$. Hence the coverages satisfy
\begin{equation*}
    \sum_{v\in L_2} w(v) \mathrm{cov}(v) = \sum_{v\in L_2} \sum_{u\in \mathrm{Child}(v)} \mathrm{cov}(v)\geq \sum_{v\in L_2} \sum_{u\in \mathrm{Child}(v)} \mathrm{cov}(u) = \sum_{v\in X} \mathrm{cov}(v)\geq m .
\end{equation*}
\begin{algorithm}
\SetAlgoLined
 $\textbf{Input:}$ $(X,d)$, $\{\mathrm{cov}(v)\}_{v\in X}$, $r$\;
 Let $L = \emptyset$, $U = X$ \;
 \While{$U\neq \emptyset$}{
 Let $v = \underset{u\in U}{\mathrm{argmax}}\ \mathrm{cov}(u)$\;
 $L\leftarrow L \cup \{v\}$\;
 Child$(v) =  \mathcal{B}_{U}(v,r)$\;
 $U\leftarrow U\setminus \mathcal{B}_{U}(v,r)$;
 }
 Return $(L,\{\mathrm{Child}(v)\}_{v\in L})$
 \caption{HS}\label{HScluster}
\end{algorithm}
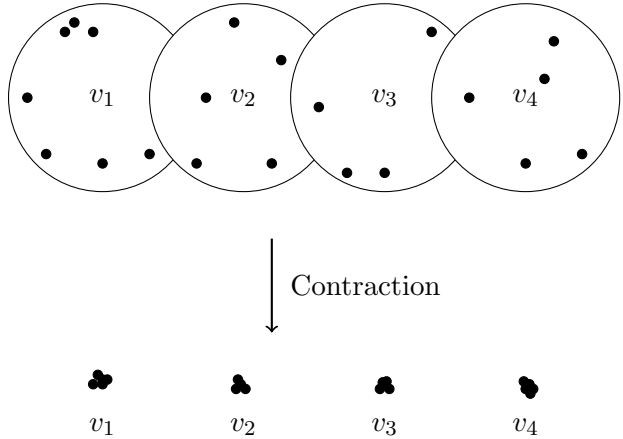
\begin{figure}
\centering

\begin{tikzpicture}[scale = 1.25]
\draw (0 ,0) circle [radius=1];
\draw [fill=white] (1.5 ,0) circle [radius=1];
\draw [fill=white] (3 ,0) circle [radius=1];
\draw [fill=white] (4.5 ,0) circle [radius=1];
\draw [fill] (-0.8, 0) circle [radius = 0.05];
\draw [fill] (-0.4, 0.7) circle [radius = 0.05];
\draw [fill] (-0.3, 0.8) circle [radius = 0.05];
\draw [fill] (-0.1, 0.7) circle [radius = 0.05];
\draw [fill] (-0.6, -0.6) circle [radius = 0.05];
\draw [fill] (0, -0.7) circle [radius = 0.05];
\draw [fill] (0.5, -0.6) circle [radius = 0.05];
\node at (0,0) {$v_1$};
\draw [fill] (1.1, 0) circle [radius = 0.05];
\draw [fill] (1.4, 0.8) circle [radius = 0.05];
\draw [fill] (1.9, 0.4) circle [radius = 0.05];
\draw [fill] (1, -0.7) circle [radius = 0.05];
\draw [fill] (1.8, -0.7) circle [radius = 0.05];
\node at (1.5, 0) {$v_2$};
\draw [fill] (2.3, -0.1) circle [radius = 0.05];
\draw [fill] (3.5, 0.7) circle [radius = 0.05];
\draw [fill] (3, -0.8) circle [radius = 0.05];
\draw [fill] (2.6, -0.8) circle [radius = 0.05];
\node at (3,0) {$v_3$};
\draw [fill] (3.9, 0) circle [radius = 0.05];
\draw [fill] (4.7, 0.2) circle [radius = 0.05];
\draw [fill] (4.8, 0.6) circle [radius = 0.05];
\draw [fill] (4.5, -0.7) circle [radius = 0.05];
\draw [fill] (5.1, -0.6) circle [radius = 0.05];
\node at (4.5,0) {$v_4$};
\draw [thick, ->] (1.8, -1.5) -- (1.8, -2.5);
\node at (2.8, -2) {Contraction};
\draw [fill] (0, -3) circle [radius = 0.05];
\draw [fill] (0, -3.05) circle [radius = 0.05];
\draw [fill] (-0.1, -3.05) circle [radius = 0.05];
\draw [fill] (-0.05, -2.95) circle [radius = 0.05];
\draw [fill] (0.05, -3) circle [radius = 0.05];
\node at (0, -3.5) {$v_1$};
\draw [fill] (1.44, -3) circle [radius = 0.05];
\draw [fill] (1.42, -3.1) circle [radius = 0.05];
\draw [fill] (1.52, -3.1) circle [radius = 0.05];
\draw [fill] (1.47, -3.05) circle [radius = 0.05];
\node at (1.5, -3.5) {$v_2$};
\draw [fill] (2.98, -3.03) circle [radius = 0.05];
\draw [fill] (3.02, -3.02) circle [radius = 0.05];
\draw [fill] (2.95, -3.1) circle [radius = 0.05];
\draw [fill] (3.05, -3.1) circle [radius = 0.05];
\node at (3, -3.5) {$v_3$};
\draw [fill] (4.48, -3.02) circle [radius = 0.05];
\draw [fill] (4.5, -3.1) circle [radius = 0.05];
\draw [fill] (4.55, -3.15) circle [radius = 0.05];
\draw [fill] (4.58, -3.1) circle [radius = 0.05];
\draw [fill] (4.54, -3.05) circle [radius = 0.05];
\node at (4.5, -3.5) {$v_4$};
\end{tikzpicture}
\caption{Example of a contraction procedure to get $\mathcal{I}_{\mathrm{contracted}}$. $v_1,v_2,v_3$, and $v_4$ are points of $L_2$, and points inside the circle centered at $v_i$ make up Child$(v_i)$.}
\label{fig:contractionfig}
\end{figure}

We will now show that if there is an integral solution satisfying $\sum_{v\in L_2}w(v)\text{cov}(v)\geq m$, we can reduce the problem to a more structured instance
\begin{equation*}
    \mathcal{I}_{\mathrm{contracted}} = ((X',d'),(k_1,2r_1),(k_2,0),m) .
\end{equation*}
The metric
$(X',d')$ is obtained by contracting each Child$(v)$, i.e. by co-locating each point in Child$(v)$ with $v$, for all $ v\in L_2$. Thus, the number of points co-located with each $v\in L_2$ is $w(v)=|\text{Child}(v)|$. We refer the reader to Figure~\ref{fig:contractionfig} for an illustration of this contraction.
\begin{remark}
Note that we will use the convention that a ball of radius 0 around any point $v$ covers all points co-located with $v$ while solving $\mathcal{I}_{\mathrm{contracted}}$.
\end{remark}
We now state the formal lemmas about the feasibility of $\mathcal{I}_{\mathrm{contracted}}$ and about approximately solving $\mathcal{I}_{\mathrm{contracted}}$ or determining its infeasibility.
\begin{lemma}\label{contractionreduction}
Either the  Robust 2-NU$k$C instance $\mathcal{I}_{\mathrm{contracted}}$ is feasible, or the inequality\\ $\sum_{v\in L_2}w(v)\mathrm{cov}(v)<m$ separates $\{(\mathrm{cov}_1(v),\mathrm{cov}_2(v))\}_{v\in X}$ from $\mathcal{P}_{\mathcal{I}}$.
\end{lemma}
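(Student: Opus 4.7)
The plan is to prove the contrapositive: if there is an integral feasible solution $(C_1, C_2)$ of $\mathcal{I}$ whose induced integral coverages satisfy $\sum_{v \in L_2} w(v)\,\mathrm{cov}(v) \geq m$, then $\mathcal{I}_{\mathrm{contracted}}$ is feasible. Combined with the calculation shown just before the lemma, which yields $\sum_{v \in L_2} w(v)\,\mathrm{cov}(v) \geq m$ for the fractional coverages at hand, this gives the lemma: when $\mathcal{I}_{\mathrm{contracted}}$ is infeasible the contrapositive forces every integral coverage, and hence every point of $\mathcal{P}_{\mathcal{I}}$, to satisfy $\sum_{v \in L_2} w(v)\,\mathrm{cov}(v) < m$, so this inequality separates the fractional point from $\mathcal{P}_{\mathcal{I}}$.

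Given such an integral solution $(C_1, C_2)$, I would first assign each integrally covered $v \in L_2$ to exactly one ball covering it, breaking ties arbitrarily; this produces disjoint sets $L_1', L_2' \subseteq L_2$ of $L_2$-points assigned to balls of $C_1$ and $C_2$ respectively, with $\sum_{v \in L_1' \cup L_2'} w(v) = \sum_{v \in L_2} w(v)\,\mathrm{cov}(v) \geq m$. For the larger radius, for each $p \in C_1$ that is assigned at least one $v \in L_1'$, pick any such $v^{*}(p)$ and open a ball of radius $2r_1$ at $v^{*}(p)$ in $\mathcal{I}_{\mathrm{contracted}}$. For any other $v \in L_1'$ assigned to $p$, the triangle inequality gives $d(v^{*}(p), v) \leq d(v^{*}(p), p) + d(p, v) \leq 2r_1$, and since distances among $L_2$-points are preserved in the contracted metric, this ball covers $v$ and hence, by the contraction, covers all of $\mathrm{Child}(v)$. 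This uses at most $|C_1| \leq k_1$ balls of radius $2r_1$.

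For the smaller radius, the crucial observation is that the HS subroutine was run with radius $2r_2$, so distinct points of $L_2$ are at pairwise distance strictly greater than $2r_2$. Hence any single ball of radius $r_2$ covers at most one point of $L_2$, so the map sending each $v \in L_2'$ to the ball of $C_2$ to which it was assigned is injective, giving $|L_2'| \leq |C_2| \leq k_2$. I then open a ball of radius $0$ at each $v \in L_2'$ in $\mathcal{I}_{\mathrm{contracted}}$, which by the Remark covers $v$ together with $\mathrm{Child}(v)$. Taken together, these $\leq k_1$ balls of radius $2r_1$ and $\leq k_2$ balls of radius $0$ cover at least $\sum_{v \in L_1' \cup L_2'} w(v) \geq m$ points of $X'$, certifying feasibility of $\mathcal{I}_{\mathrm{contracted}}$. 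The only delicate step is the triangle-inequality move-and-double trick for the $C_1$-balls, essentially identical to the one already used in the proof of Lemma~\ref{reduction}, so I do not foresee a real obstacle here.
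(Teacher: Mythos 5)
Your proposal is correct and follows essentially the same route as the paper's proof: argue the contrapositive, relocate each $C_1$-ball to a covered point of $L_2$ while doubling its radius to $2r_1$, and use the pairwise separation $>2r_2$ of $L_2$ to shrink each $C_2$-ball to radius $0$ at the unique $L_2$-point it covers. Your explicit assignment of each covered point of $L_2$ to a single ball is a slightly more careful bookkeeping of the same counting the paper performs.
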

We summarize the key properties of $I_{\mathrm{contracted}}$ in the following definition.
\begin{definition}{\rm
An instance $\mathcal{I}=((X,d),(k_1,r_1),(k_2,r_2),m)$ of Robust 2-NU$k$C is called a \emph{contracted} instance if we are given an additional set $L\subseteq X$ in the input and $\mathcal{I}$ and $L$ satisfy the following properties.
\begin{enumerate}
    \item $r_2=0$.
    \item For every point $u\in X$ there is a point $v\in L$ such that $d(u,v)=0$. Furthermore, $d(v,v')>0$ for every $v,v'\in L$.
\end{enumerate}}
\end{definition}
\begin{lemma}\label{algoforcontracted}
There is a polynomial time 4-approximation algorithm for contracted instances. 
\end{lemma}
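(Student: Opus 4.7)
The plan is to apply a second, inner layer of round-or-cut analogous to the outer one for Robust 2-NU$k$C, but now focused on the larger radius $r_1$. I would first write the natural LP relaxation for the contracted instance (essentially LP1 specialized to $r_2=0$) and run the ellipsoid method on the polytope $\mathcal{P}_{\mathcal{I}}$ of integral coverages. In each iteration, given a fractional $(\mathrm{cov}_1,\mathrm{cov}_2)$ satisfying the LP, I apply Algorithm~\ref{HScluster} at radius $2r_1$ to the representative set $L$ using $\mathrm{cov}(\cdot)$ as the score. This yields a set $L_1 \subseteq L$ with pairwise distance strictly greater than $2r_1$, together with a partition $\{\mathrm{Child}_1(v)\}_{v \in L_1}$ of $L$. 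Setting $W_1(v) = \sum_{u \in \mathrm{Child}_1(v)} w(u)$, the same HS-summation argument used in Section~\ref{Sec:alt} yields $\sum_{v \in L_1} W_1(v)\,\mathrm{cov}(v) \ge \sum_{u \in L} w(u)\,\mathrm{cov}(u) \ge m$, so the current fractional solution satisfies this weighted coverage inequality.

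I would next define the associated \emph{laminar instance} $\mathcal{I}_{\mathrm{lam}}$: at each $v \in L_1$ the options are either to open a single big ball at $v$, which (since $\mathrm{Child}_1(v) \subseteq \mathcal{B}(v, 2r_1)$) covers the entire cluster of total weight $W_1(v)$, or to open small balls of radius $0$ at a chosen subset of the points of $\mathrm{Child}_1(v)$, each covering a co-located class of weight $w(u)$. The choices across different $v \in L_1$ interact only through the global budgets $k_1,k_2$, and within a single cluster the optimal subset of small balls is a top-$j$ prefix sorted by $w(u)$. Hence $\mathcal{I}_{\mathrm{lam}}$ can be solved exactly by a two-dimensional knapsack dynamic program with state (cluster index, $k_1', k_2'$) in polynomial time.

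If the DP returns a feasible laminar solution of coverage $\ge m$, lifting it back to $\mathcal{I}$ yields an integral solution whose big balls have radius $2r_1$ at $L_1$-centers and whose small balls have radius $0$; combined with an additional factor-$2$ slack needed to convert arbitrary integral solutions of $\mathcal{I}$ into laminar ones (see below), this gives the claimed $4$-approximation. Otherwise, the DP certifies that no laminar integral solution covers $m$ points within the budgets, and I would convert this combinatorial infeasibility into the linear inequality $\sum_{v \in L_1} W_1(v)\,\mathrm{cov}(v) \ge m$, which is satisfied by the current fractional solution but, in the infeasible branch, violated by every coverage vector in $\mathcal{P}_{\mathcal{I}}$; this separating hyperplane is fed back to the ellipsoid method.

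The main obstacle is precisely this separating-hyperplane step, analogous to Lemma~\ref{contractionreduction}. A big ball of radius $r_1$ centered at an arbitrary $p \in L$ in an integral solution of $\mathcal{I}$ may cover points in several $\mathrm{Child}_1$ clusters simultaneously, so rerouting it to a single $L_1$-centered laminar big ball is not immediate. The key technical step is to use the $>2r_1$ pairwise separation of $L_1$ — which ensures that each $r_1$-ball of the integral solution contains at most one point of $L_1$ — to attribute each big ball to the unique $L_1$-cluster whose center it sits in, and then argue that allowing a factor-$2$ slack in the radius absorbs the contributions of neighboring clusters while keeping the $k_1$-budget intact. This transforms any integral solution of $\mathcal{I}$ into a feasible laminar one with at least the same weighted $L_1$-coverage, so infeasibility of the laminar instance forces $\sum_{v \in L_1} W_1(v)\,\mathrm{cov}(v) < m$ on all of $\mathcal{P}_{\mathcal{I}}$, completing the round-or-cut argument.
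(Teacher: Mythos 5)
Your overall architecture (an inner round-or-cut that reduces to a laminar instance solved by DP, with a separating hyperplane on failure) matches the paper, but the key technical step is wrong in two places, and the paper's proof is structurally different precisely where your argument breaks.

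First, your separating inequality $\sum_{v\in L_1} W_1(v)\,\mathrm{cov}(v)\ge m$ does not transfer to the laminar instance. In your laminar instance a point $v\in L_1$ can ``earn'' the full cluster weight $W_1(v)$ only if a \emph{big} ball is opened at $v$; if $v$ is covered by a radius-$0$ ball in the integral solution, the corresponding laminar solution collects only $w(v)\le W_1(v)$. Since $\mathrm{cov}(v)=\mathrm{cov}_1(v)+\mathrm{cov}_2(v)$ conflates the two, an integral solution satisfying your weighted inequality need not yield a feasible laminar solution of coverage $m$, so infeasibility of the DP does not certify that the inequality is violated on $\mathcal{P}_{\mathcal{I}}$. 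Second, your rerouting claim fails: a radius-$r_1$ ball centered at an arbitrary $p$ may cover points of several clusters $\mathrm{Child}_1(v)$, or no point of $L_1$ at all, and no amount of radius dilation lets a \emph{single} laminar ball (which by your own definition covers exactly one cluster) absorb that coverage while ``keeping the $k_1$-budget intact''; with only $>2r_1$ separation you also cannot guarantee that the children sets $C(v)=\{u: \mathcal{B}(v,2r_1)\cap\mathcal{B}(u,0)\neq\emptyset\}$ are disjoint, which the DP of Lemma~\ref{laminardp} requires.

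The paper avoids both problems as follows. It clusters with CGK at $\alpha_1=4$, so $L_1$ has pairwise separation $>4r_1$ and the children sets of distinct $L_1$-points are provably disjoint (laminarity). The cut is the \emph{unweighted}, $\mathrm{cov}_1$-only inequality $\sum_{v\in L_1}\mathrm{cov}_1(v)\le k_1-2$, and the argument branches: if $\sum_{v\in L_1}\mathrm{cov}_1(v)\le k_1-2$, item (4) of Lemma~\ref{maketree} directly outputs a $4$-approximate solution (this is where the factor $4$ comes from, not from a laminar lifting); otherwise, membership in $\mathcal{P}_{\mathcal{I}}$ forces an integral solution in which at least $k_1-1$ of the big balls each cover a (unique) point of $L_1$, the single exceptional ball is guessed and removed, and only then is the residual instance laminar and handed to the DP. You would need to replace your weighted inequality and rerouting step with this case analysis (or something equivalent) for the proof to go through.
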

Before we prove these lemmas, let us see how they imply Theorem~\ref{bottomupnukc}. In the current iteration of round-or-cut we have constructed $\mathcal{I}_{\mathrm{contracted}}$ using the fractional coverages as described above. Then we apply Lemma~\ref{algoforcontracted} to $\mathcal{I}_{\mathrm{contracted}}$ as it is a contracted instance with $L=L_2$: If we obtain a 4-approximate solution $C'$ for $\mathcal{I}_{\mathrm{contracted}}$ then we can increase the radius of each ball in $C'$ by $2r_2$ and get a solution for $\mathcal{I}$ that covers at least $m$ points. This is because the radius of Child$(v)$ for any $v\in L_2$ was $2r_2$ before the contraction procedure. Thus, we use $k_1$ balls of radius $4\cdot 2r_1+2r_2\leq 10r_1$ and $k_2$ balls of radius $0+2r_2=2r_2$ to cover at least $m$ points. This results in a 10-approximate solution for $\mathcal{I}$. Otherwise, the algorithm of Lemma~\ref{algoforcontracted} outputs that $\mathcal{I}_{\mathrm{contracted}}$ is infeasible, and by Lemma~\ref{contractionreduction} we know that $\sum_{v\in L_2}w(v)\text{cov}(v)<m$ acts as a separating inequality which is then fed back to the ellipsoid algorithm. This concludes the proof of Theorem \ref{bottomupnukc}.

Now we present the proof of Lemma~\ref{contractionreduction}. In the next subsection we present the algorithm of Lemma~\ref{algoforcontracted} and its analysis.
\begin{proof}[Proof of Lemma~\ref{contractionreduction}]
We know that $\{(\text{cov}_1(v),\text{cov}_2(v))\}_{v\in X}$ satisfies $\sum_{v\in L_2}w(v)\text{cov}(v)\geq m$. If $\{(\text{cov}_1(v),\text{cov}_2(v))\}_{v\in X}$ is in $\mathcal{P}_{\mathcal{I}}$ then there must be an integral solution $C=(C_1,C_2)$ of $\mathcal{I}$, i.e. $k_i$ balls of radius $r_i$ around points in $C_i$ for all $1\leq i \leq 2$ cover $m$ points of $X$, satisfying $\sum_{v\in L_2}w(v)\mathbbm{1}_{\{v \text{ covered by }C\}}\geq m$. We construct a solution for $\mathcal{I}_{\mathrm{contracted}}$ as follows: Move each ball of radius $r_1$ centered at some point in $C_1$ to be centered at any point in $L_2$ that it covers and increase its radius to $2r_1$. Thus, each such ball still covers all the points of $L_2$ it covered before. Let $C'_1\subseteq L_2$ be the set of centers of balls of radius $2r_1$ obtained by this procedure. Similarly, we obtain $C'_2\subseteq L_2$ by applying this procedure to $C_2$. 
However, since the pairwise distance between points of $L_2$ was strictly more than $2r_2$, each such ball of radius $r_2$ can cover at most one point of $L_2$. We can decrease its radius to 0 and it still covers all the points of $L_2$ it covered before. Let $C'=(C'_1,C'_2)$. From the construction it follows that since $C$ satisfies $\sum_{v\in L_2}w(v)\mathbbm{1}_{\{v \text{ covered by C }\}}\geq m$, $C'$ also satisfies $\sum_{v\in L_2}w(v)\mathbbm{1}_{\{v \text{ covered by } C' \}}\geq m$. Thus, $k_1$ balls of radius $2r_1$ around points in $C'_1$ and $k_2$ balls of radius 0 around points in $C'_2$ cover at least $m$ points in the instance $\mathcal{I}_{\mathrm{contracted}}$, as for any $v\in L_2$, all the $w(v)=|\text{Child}(v)|$ -many points are co-located with $v$ in $\mathcal{I}_{\mathrm{contracted}}$. This finishes the proof of the lemma.
\end{proof}
\subsection{Algorithm for contracted instances}
We now present the proof of Lemma~\ref{algoforcontracted}.
\begin{proof}[Proof of Lemma \ref{algoforcontracted}]
Recall that we are given a contracted instance of Robust 2-NU$k$C   $\mathcal{I}= ((X,d),(k_1,r_1),(k_2,r_2),m)$ and a set $L\subseteq X$ that satisfy the following properties.
\begin{enumerate}
    \item $r_2=0$.
    \item For every point $u\in X$ there is a point $v\in L$ such that $d(u,v)=0$. Furthermore, $d(v,v')>0$ for every $v,v'\in L$.
\end{enumerate}
We again use the round-or-cut method on $\mathcal{P}_{\mathcal{I}}$ using the ellipsoid algorithm. This time, when we are given fractional coverages $\{(\mathrm{cov}_1(v),\mathrm{cov}_2(v))\}_{v\in X}$ in each round we either generate a 4-approximate solution to $\mathcal{I}$, or we find an inequality separating these coverages from $\mathcal{P}_{\mathcal{I}}$ which is then fed back to the ellipsoid algorithm.

The algorithm in~\cite{DBLP:conf/ipco/ChakrabartyN21}, as well as our method, uses a standard greedy partitioning scheme $\mathcal{I}$ called CGK \cite{chakrabarty2016non}, which is just the HS procedure applied twice. Since this is a standard technique, we will not present the proofs associated with it but rather just state its guarantees in the form of a lemma that will be useful for our algorithm. CGK uses the fractional coverages $\{(\mathrm{cov}_1(v),\mathrm{cov}_2(v))\}_{v\in X}$ to obtain a tree-structured instance with properties summarized as follows.
\begin{lemma}[~\cite{chakrabarty2016non, DBLP:conf/ipco/ChakrabartyN21}]
\label{maketree}
Given an instance $\mathcal{I}=((X,d),(k_1,r_1),(k_2,r_2),m)$ of Robust 2-NU$k$C, parameters $\alpha_1,\alpha_2\geq 2$ and $\{(\mathrm{cov}_1(v),\mathrm{cov}_2(v)\}_{v\in X}$ satisfying $\sum_{v\in X}\mathrm{cov}(v)\geq m$, there is a polynomial time algorithm (CGK) that returns the following:
\begin{enumerate}
    \item sets $L_1,L_2\subseteq X$ that satisfy $d(v,v')>\alpha_i r_i$ for all $v,v' \in L_i$ for $i\in \{1,2\}$; and
    \item sets $\mathrm{Child}_1(v)\subseteq L_2 \, $ for all $ \, v\in L_1$ that partition $L_2$. Furthermore, $d(u,v)\leq \alpha_1 r_1$ for all $u\in \mathrm{Child}_1(v)$ and $v\in L_1$; and
    \item sets $\mathrm{Child}_2(v)\subseteq X \,$ for all $ \, v\in L_2$ that partition $X$. Furthermore, $d(u,v)\leq \alpha_2 r_2$ for all $u\in \mathrm{Child}_2(v)$ and $v\in L_2$.
    \item If $\sum_{v\in L_1}\mathrm{cov}_1(v)\leq k_1 - x$ for some $x\geq 0$ and $\sum_{v\in L_2}\mathrm{cov}_2(v)\leq k_2$ , then one can obtain a solution for $\mathcal{I}$ that uses $k_1+2-x$ balls of radius $(\alpha_1+\alpha_2)r_1$ and $k_2$ balls  of radius $\alpha_2r_2$. 
\end{enumerate}
Furthermore, if $\{(\mathrm{cov}_1(v),\mathrm{cov}_2(v))\}_{v\in X}$ is feasible for LP1 then $\sum_{v\in L_i}\mathrm{cov}_i(v)\leq k_i$ is satisfied for $i\in \{1,2\}$.
\end{lemma}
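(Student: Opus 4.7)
The plan is to construct $L_1,L_2$ by two successive applications of the HS subroutine from Section~3, read off the metric properties (1)--(3) directly from the HS guarantees, derive the ``furthermore'' clause by summing the LP1 covering inequality over a disjoint family of balls, and finally obtain (4) by a two-level rounding argument that is the heart of the CGK reduction to the firefighter problem on trees.

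\textbf{Construction.} Run HS on $(X,d)$ with radius $\alpha_2 r_2$ and scores $\mathrm{cov}(v)$ to produce $(L_2,\{\mathrm{Child}_2(v)\}_{v\in L_2})$. Then run HS on $(L_2,d)$ with radius $\alpha_1 r_1$ and scores $\mathrm{cov}_1(v)$ to produce $(L_1,\{\mathrm{Child}_1(v)\}_{v\in L_1})$. Items (1), (2), (3) are immediate from HS: it selects centers at pairwise distance strictly greater than its radius parameter, and partitions the ground set into balls of that radius around the selected centers (note that $L_1 \subseteq L_2$ automatically). For the final ``furthermore'' clause, for each $i\in\{1,2\}$ the balls $\mathcal{B}(v,r_i)$ with $v\in L_i$ are pairwise disjoint since the HS separation gives pairwise distance $>\alpha_i r_i\ge 2r_i$. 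Summing the LP1 constraint $\mathrm{cov}_i(v)\le \sum_{u\in \mathcal{B}(v,r_i)}x_{u,i}$ over $v\in L_i$ and using $\sum_{u\in X}x_{u,i}\le k_i$ gives $\sum_{v\in L_i}\mathrm{cov}_i(v)\le k_i$.

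\textbf{Property (4).} View the output as a two-level rooted tree $T$: the root's children are the points of $L_1$; the children of each $v\in L_1$ are the points of $\mathrm{Child}_1(v)\subseteq L_2$; the children of each $u\in L_2$ are the points of $\mathrm{Child}_2(u)$, which play the role of leaves. By the triangle inequality, a ball at $v\in L_1$ of radius $\alpha_1 r_1+\alpha_2 r_2\le (\alpha_1+\alpha_2)r_1$ (using $r_1\ge r_2$) covers every leaf below $v$, and a ball at $u\in L_2$ of radius $\alpha_2 r_2$ covers every leaf below $u$. One now writes the natural leaf-coverage LP on $T$ with variables for opening a level-1 or level-2 node and for the coverage contributed at each leaf. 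The hypotheses $\sum_{v\in L_1}\mathrm{cov}_1(v)\le k_1-x$ and $\sum_{v\in L_2}\mathrm{cov}_2(v)\le k_2$, together with the LP1 covering and opening inequalities aggregated along each Child set, produce a feasible fractional solution to this tree LP with total saved leaf weight at least $m$. Rounding it level by level, as in the CGK reduction to RMFC-T, loses at most one opening per level, which produces an integral solution with $k_1+2-x$ level-1 openings and $k_2$ level-2 openings; translating back to $(X,d)$ via the two radii above yields the promised solution for $\mathcal{I}$.

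The main obstacle is this final rounding step: the ``$+2$'' slack is the only non-trivial loss in the lemma and comes from combining the two single-level roundings on $T$. Setting up the tree LP and verifying that the supplied fractional coverages project to a feasible solution on $T$ is the key technical work; once that is in place the lift back to a solution of $\mathcal{I}$ is just the triangle inequality. Items (1)--(3) and the LP inequality, by contrast, are essentially bookkeeping on top of two invocations of HS.
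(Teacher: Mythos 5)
First, a point of reference: the paper does not prove this lemma at all --- it is imported as a black box from \cite{chakrabarty2016non, DBLP:conf/ipco/ChakrabartyN21} (``since this is a standard technique, we will not present the proofs''), so there is no in-paper argument to compare against. Your construction --- HS at radius $\alpha_2 r_2$ on $X$ scored by $\mathrm{cov}$, then HS at radius $\alpha_1 r_1$ on $L_2$ scored by $\mathrm{cov}_1$ --- is exactly the CGK two-pass clustering, and your treatment of items (1)--(3) and of the final clause (pairwise disjointness of the balls $\mathcal{B}(v,r_i)$, $v\in L_i$, followed by summing the LP1 covering constraints against $\sum_{u\in X} x_{u,i}\le k_i$) is correct and complete.

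Item (4), however, is the only substantive claim in the lemma, and your argument for it is a pointer rather than a proof. Two concrete issues. First, the accounting as written does not produce the claimed numbers: ``loses at most one opening per level'' yields $k_1+1-x$ balls at level~1 and $k_2+1$ at level~2, whereas the lemma promises exactly $k_2$ at level~2 and pushes both units of slack to level~1. The missing step is the exchange in which the single overflow ball at level~2 is absorbed by opening one additional ball of radius $(\alpha_1+\alpha_2)r_1$ at its level-1 ancestor; that exchange is precisely where the ``$+2$'' comes from, and without it you prove a different (weaker at level~2) bicriteria guarantee that would not suffice for the way Lemma~\ref{maketree}(4) is used, since the application needs the level-2 budget to be respected exactly. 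Second, the feasibility of the fractional solution on the tree is asserted but never verified; it rests on the HS max-score selection rule, which gives $\mathrm{cov}(u)\le \mathrm{cov}(v)$ for $u\in\mathrm{Child}_2(v)$ (hence $\sum_{v\in L_2}|\mathrm{Child}_2(v)|\,\mathrm{cov}(v)\ge m$) and $\mathrm{cov}_1(u)\le\mathrm{cov}_1(v)$ for $u\in\mathrm{Child}_1(v)$. This is exactly why the two HS passes must be scored by $\mathrm{cov}$ and $\mathrm{cov}_1$ respectively --- you made the right choice, but the domination properties are what make the level-by-level rounding go through and should be stated and used explicitly rather than folded into ``as in the CGK reduction.''
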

First, we check whether $\sum_{v\in X}\mathrm{cov}(v)\geq m$. If this is not true, then this inequality itself serves as a separating inequality. Next, we run CGK on $\mathcal{I}$ using these coverages with $\alpha_1=4$ and $\alpha_2 = 2$ to obtain  $L_1,L_2$, and $\{\mathrm{Child}_1(v)\}_{v\in L_1},\{\mathrm{Child}_2(v)\}_{v\in L_2}$.  We check if $\sum_{v\in L_i}\mathrm{cov}_i(v)\leq k_i$ for $i\in \{1,2\}$. If not, then by Lemma~\ref{maketree} these coverages are not feasible for LP1 and thus whichever inequality is violated will serve as a separating inequality. Now we branch into two cases. 
First, assume that $\sum_{v\in L_1}\mathrm{cov}_1(v)\leq k_1-2$. In this case we can apply (4) of Lemma \ref{maketree} to get a solution for $\mathcal{I}$ that opens $k_1$ balls of radius $\alpha_1 r_1+\alpha_2r_2 = 4r_1$ and $k_2$ balls of radius $\alpha_2r_2 = 0$, which is a 4-approximate solution to $\mathcal{I}$.

For the remainder of the proof we will assume that $\sum_{v\in L_1}\mathrm{cov}_1(v)> k_1-2$.
If the fractional coverages $\{(\mathrm{cov}_1(v),\mathrm{cov}_2(v))\}_{v\in X}$ are in $\mathcal{P}_{\mathcal{I}}$, then there must be an integral solution $(C_1,C_2)$ of $\mathcal{I}$ such that balls of radius $r_1$ centered around points of $S_1$ cover at least $k_1-1$ points of $L_1$.  
Moreover, since the pairwise distance between points of $L_1$ is strictly more than $\alpha_1r_1=4r_1$, each ball of radius $r_1$ of $C_1$ can cover at most one point in $L_1$. 
Thus, there is at most one ball centered at some point $v_1\in C_1$ that does not cover a point of $L_1$ and we can guess $v_1$ by enumerating over all possibilities. 
By opening balls of radius $2r_1$ around points of $L_1$ covered by balls centered at points in $C_1$, and a ball of radius $r_1$ around $v_1$, we can cover all points that are covered by balls centered at points in $C_1$. We remove the ball of radius $r_1$ around $v_1$ from the metric $X$, update $m$ by subtracting the number of points in this ball around $v_1$, and reduce $k_1$ by 1. For simplicity of notation, we still refer to these \emph{updated quantities} as $X$, $m$ and $k_1$. This now implies there is a feasible solution to the following question: Is it possible to open $k_1$ balls of radius $2r_1$ only centered at points in $L_1$, and $k_2$ balls of radius $0$ only centered at points of $L$ (this is without loss of generality as every point in $X$ is co-located with a point in $L$ by the definition of a contracted instance) to cover at least $m$ points.  We claim that this is a \emph{laminar instance} according to the following definition.
\begin{definition}\label{laminarinstance}\rm{
An instance of Robust 2-NU$k$C, $\mathcal{I} = ((X,d),(k_1,r_1),(k_2,r_2),m)$ is said to be \emph{laminar} if we are given sets $L_1,L_2\subseteq X$ such that the following are satisfied.
\begin{enumerate}
    \item The $k_i$ balls of radius $r_i$ are only allowed to be centered at points in $L_i$, $ i\in \{1,2\}$;
    \item $\mathcal{B}(u,r_i)\cap \mathcal{B}(v,r_i) = \emptyset$ for all $u,v\in L_i$, $i\in \{1,2\}$;
    \item $C(v)\cap C(v')=\emptyset$ for all $v,v' \in L_1$, where $C(v) = \{u\in L_2: \mathcal{B}(v,r_1)\cap \mathcal{B}(u,r_2)\neq \emptyset\}$ are the \emph{children} of $v$.
\end{enumerate}
}
\end{definition}
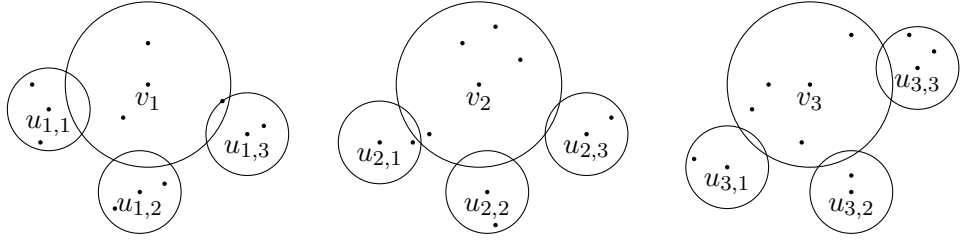
\begin{figure}\label{fig:laminarexample}
\centering
\begin{tikzpicture}[scale=1.1]
\draw (0 ,0) circle [radius=1];
\draw [fill] (0,0.5) circle [radius=0.02];
\draw [fill] (0.9,-0.2) circle [radius=0.02];
\draw [fill] (-0.3,-0.4) circle [radius=0.02];
\draw [fill] (-1.4,0) circle [radius=0.02];
\draw [fill] (-1.3,-0.7) circle [radius=0.02];
\draw [fill] (-0.4,-1.5) circle [radius=0.02];
\draw [fill] (0.2,-1.2) circle [radius=0.02];
\draw [fill] (1.4,-0.5) circle [radius=0.02];
\draw [fill] (0,0) circle [radius=0.02];
\node at (0,-0.2) {$v_1$};
\draw (4 ,0) circle [radius=1];
\draw [fill] (4,0) circle [radius=0.02];
\draw [fill] (4.2,0.7) circle [radius=0.02];
\draw [fill] (3.2, -0.7) circle [radius=0.02];
\draw [fill] (4.2, -1.7) circle [radius=0.02];
\draw [fill] (4.5, 0.3) circle [radius=0.02];
\draw [fill] (3.8,0.5) circle [radius=0.02];
\draw [fill] (3.4,-0.6) circle [radius=0.02];
\draw [fill] (5.6,-0.4) circle [radius=0.02];
\node at (4,-0.2) {$v_2$};
\draw (8 ,0) circle [radius=1];
\draw [fill] (7.5,0) circle [radius=0.02];
\draw [fill] (7.3,-0.3) circle [radius=0.02];
\draw [fill] (7.9,-0.7) circle [radius=0.02];
\draw [fill] (8.5,0.6) circle [radius=0.02];
\draw [fill] (6.6,-0.9) circle [radius=0.02];
\draw [fill] (8.5,-1.1) circle [radius=0.02];
\draw [fill] (9.5,0.4) circle [radius=0.02];
\draw [fill] (9.2,0.6) circle [radius=0.02];
\draw [fill] (8,0) circle [radius=0.02];
\node at (8,-0.2) {$v_3$};
\draw (-1.2 ,-0.3) circle [radius=0.5];
\draw [fill] (-1.2 ,-0.3) circle [radius=0.02];
\node at (-1.2,-0.5) {$u_{1,1}$};
\draw (-0.1 ,-1.3) circle [radius=0.5];
\draw [fill] (-0.1 ,-1.3) circle [radius=0.02];
\node at (-0.1 ,-1.5) {$u_{1,2}$};
\draw (1.2 ,-0.6) circle [radius=0.5];
\draw [fill] (1.2 ,-0.6) circle [radius=0.02];
\node at (1.2 ,-0.8) {$u_{1,3}$};
\draw (2.8 ,-0.7) circle [radius=0.5];
\draw [fill] (2.8 ,-0.7) circle [radius=0.02];
\node at (2.8 ,-0.9) {$u_{2,1}$};
\draw (4.1 ,-1.3) circle [radius=0.5];
\draw [fill] (4.1 ,-1.3) circle [radius=0.02];
\node at (4.1 ,-1.5) {$u_{2,2}$};
\draw (5.3 ,-0.6) circle [radius=0.5];
\draw [fill] (5.3 ,-0.6) circle [radius=0.02];
\node at (5.3 ,-0.8) {$u_{2,3}$};
\draw (7 ,-1) circle [radius=0.5];
\draw [fill] (7 ,-1) circle [radius=0.02];
\node at (7 ,-1.2) {$u_{3,1}$};
\draw (8.5 ,-1.3) circle [radius=0.5];
\draw [fill] (8.5 ,-1.3) circle [radius=0.02];
\node at (8.5 ,-1.5) {$u_{3,2}$};
\draw (9.3 ,0.2) circle [radius=0.5];
\draw [fill] (9.3 ,0.2) circle [radius=0.02];
\node at (9.3 ,0) {$u_{3,3}$};
\end{tikzpicture}
\caption{Example of a laminar instance with $L_1=\{v_1,v_2,v_3\}$ and C$(v_i)=\{u_{i,1},u_{i,2},u_{i,3}\}$ for all $i\in \{1,2,3\}$}
\end{figure}
We refer the reader to Figure \ref{fig:laminarexample} for an illustration of a laminar instance. Due to this laminar structure, we can easily solve laminar instances using standard dynamic programming techniques.
\begin{lemma}\label{laminardp}
There is a polynomial time algorithm based on dynamic programming that exactly solves a given laminar Robust 2-NU$k$C instance $\mathcal{I} = ((X,d),(k_1,r_1),(k_2,r_2),m)$  with sets of candidate centers $L_1,L_2$.
\end{lemma}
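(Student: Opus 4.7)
The plan is to exploit the laminar structure to reduce the problem to a forest of independent sub-problems, build a small per-tree table, and then combine via a knapsack-style DP across trees.

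First I would verify that the three laminar properties yield a forest on the candidate centers: each $v \in L_1$ is the root of a tree whose leaves are $C(v) \subseteq L_2$, and each \emph{orphan} $u \in L_2$ not contained in any $C(v)$ is a singleton tree. The key structural observation is that the coverage decisions in distinct trees are independent. By property 2 the balls $\mathcal{B}(v, r_1)$ are pairwise disjoint across $v \in L_1$ and the balls $\mathcal{B}(u, r_2)$ are pairwise disjoint across $u \in L_2$. By the definition of $C(v)$ in Definition \ref{laminarinstance}, any point lying in $\mathcal{B}(v, r_1) \cap \mathcal{B}(u, r_2)$ forces $u \in C(v)$, so that point is "owned" by $v$'s tree alone. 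Points lying in no candidate ball are permanently uncoverable and are simply ignored.

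Second, for each tree rooted at some $v \in L_1$ I would build a small table $f_v(i, j)$ for $i \in \{0, 1\}$ and $0 \le j \le |C(v)|$ recording the maximum number of tree-territory points covered when $v$ is opened iff $i = 1$ and exactly $j$ children of $v$ are opened. Because the $r_2$-balls of the children are pairwise disjoint, opening a particular child $u$ contributes either $|\mathcal{B}(u, r_2)|$ new points (if $i = 0$) or $|\mathcal{B}(u, r_2) \setminus \mathcal{B}(v, r_1)|$ new points (if $i = 1$), independently of which other children are opened. Hence $f_v(i, j)$ is obtained, for each fixed $i$, by sorting the children by their per-child contribution and summing the top $j$ contributions. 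Orphan trees are handled trivially with $i = 0$ and a single candidate ball.

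Third, I would combine the per-tree tables with a standard knapsack-style DP across the $N$ trees. Let $F(t, a, b)$ denote the maximum total coverage achievable using the first $t$ trees with at most $a$ balls of radius $r_1$ and at most $b$ balls of radius $r_2$, with $F(0, a, b) = 0$ and transition
\[
F(t, a, b) \;=\; \max_{(i, j)} \Bigl( F(t-1,\, a - i,\, b - j) + f_{v_t}(i, j) \Bigr).
\]
The instance is feasible with coverage $m$ iff $F(N, k_1, k_2) \ge m$, and a concrete set of centers is recovered by routine DP back-tracing. The table has size $O(N k_1 k_2)$ and each transition is polynomial, so the overall algorithm runs in polynomial time. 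The only non-routine point is the independent-territories observation in step one, after which everything else is textbook knapsack DP; I therefore do not anticipate a serious obstacle.
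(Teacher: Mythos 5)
Your proposal is correct and follows essentially the same route as the paper's Appendix~A: decompose the instance into independent subtrees using the laminarity properties, compute a per-tree coverage profile indexed by whether the root is opened and how many children are opened, and combine these profiles with a knapsack-style DP over the budgets $k_1$, $k_2$ and the coverage target. The only deviation is that you replace the paper's inner DP over the children of each $v \in L_1$ with a sort-and-take-top-$j$ greedy step (valid because the children's $r_2$-balls are pairwise disjoint, so their contributions are additive), which is a minor simplification within the same framework rather than a different approach.
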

The precise details of the dynamic programming algorithm and the proof of Lemma \ref{laminardp} are given in Appendix A. Equipped with this definition and Lemma \ref{laminardp}, we now formally show why our updated contracted instance is laminar, and also how to solve it.
\begin{claim}
The \emph{updated instance} $((X,d),(k_1,2r_1),(k_2,r_2),m)$ where $r_2=0$ with sets $L_1$ and $L$ of candidate centers for balls of radius $2r_1$ and $0$ respectively, is a laminar instance.
\end{claim}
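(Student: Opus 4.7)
The plan is to verify each of the three conditions in Definition~\ref{laminarinstance} for the updated instance, using $L_1$ as the candidate set for balls of radius $2r_1$ and $L$ for balls of radius $0$. Condition (1) is immediate: by construction the algorithm only opens balls of radius $2r_1$ at points of $L_1$ and balls of radius $0$ at points of $L$. So the real content of the claim is in conditions (2) and (3), both of which will follow from the pairwise-separation guarantees that have already been established: the $\alpha_1 r_1 = 4r_1$ separation among points of $L_1$ provided by Lemma~\ref{maketree}, and the strict positivity of distances between distinct points of $L$ coming from the contracted-instance property.

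For condition (2), first consider two distinct $v,v'\in L_1$. Since $d(v,v') > 4r_1$ by (1) of Lemma~\ref{maketree} with $\alpha_1 = 4$, the triangle inequality immediately yields $\mathcal{B}(v,2r_1)\cap \mathcal{B}(v',2r_1)=\emptyset$. For two distinct $v,v'\in L$, by the definition of a contracted instance $d(v,v')>0$, so any point co-located with $v$ (i.e., in $\mathcal{B}(v,0)$) cannot be co-located with $v'$, giving $\mathcal{B}(v,0)\cap \mathcal{B}(v',0)=\emptyset$. This handles both radii.

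For condition (3), I will unpack what $C(v)$ means for $v\in L_1$. By the convention for balls of radius $0$ and by the fact that every point of $X$ is co-located with some point of $L$, we get
\[
C(v) \;=\; \{u\in L : \mathcal{B}(v,2r_1)\cap \mathcal{B}(u,0)\neq \emptyset\} \;=\; \{u\in L : d(v,u)\leq 2r_1\}.
\]
Suppose towards contradiction that some $u\in L$ belongs to $C(v)\cap C(v')$ for distinct $v,v'\in L_1$. Then $d(v,u)\leq 2r_1$ and $d(v',u)\leq 2r_1$, so by the triangle inequality $d(v,v')\leq 4r_1$. This contradicts $d(v,v')>4r_1$ obtained from Lemma~\ref{maketree}, completing the verification.

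There is no real obstacle here; the claim is essentially a bookkeeping check. The only subtlety worth being explicit about is the interpretation of $\mathcal{B}(u,0)$ via the stated convention for contracted instances (it covers exactly the points co-located with $u$), since this is what allows $C(v)$ to be rewritten as a simple distance condition $d(v,u)\le 2r_1$ and lets the $4r_1$-separation of $L_1$ do all the work.
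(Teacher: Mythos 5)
Your proof is correct and follows essentially the same route as the paper: the triangle inequality combined with the strict $4r_1$-separation of $L_1$ gives both the disjointness of the radius-$2r_1$ balls and the disjointness of the children sets, while the positivity of distances in $L$ handles the radius-$0$ balls. The only (immaterial) difference is that the paper additionally assigns each point of $L$ lying in no $C(v)$ to an arbitrary $C(v)$, a bookkeeping step for the subsequent dynamic program rather than a requirement of Definition~\ref{laminarinstance}.
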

\begin{proof}
For every $v\in L_1$, let $C(v) = \{p\in L_2: \mathcal{B}(v,2r_1)\cap \mathcal{B}(p,0)\neq \emptyset\}$. We claim that for any $u,v\in L_1$, $C(v)\cap C(u)=\emptyset$. Suppose for a contradiction that there are $u,v\in \widehat{L}_1$ such that $C(v)\cap C(u)\neq\emptyset$ and let $p$ be a point in the intersection. Then $d(u,v)\leq d(u,p)+d(p,v)\leq 4r_1$ by triangle inequality, but we know that the pairwise distance of points in $L_1$ is strictly more than $4r_1$, a contradiction. This also shows that $\mathcal{B}(u,2r_1)\cap\mathcal{B}(v,2r_1)=\emptyset$ $\forall u,v \in L_1$. Also, since $d(u,v)>0$, $\mathcal{B}(u,0)\cap \mathcal{B}(v,0) = \emptyset$ for any $u,v\in L$ as per the definition of a contracted instance. We add all points $p$ in $L$ that do not appear in $C(v)$ for any $v\in L_1$ to the set $C(v)$ of an arbitrary $v\in L_1$.
\end{proof}
Thus, we can find a solution to this instance using Lemma \ref{laminardp}. A solution to this problem will either result in a 2-approximation for $\mathcal{I}$, or, if the algorithm of Lemma \ref{laminardp} returns infeasible for each guess of $v_1$, then it implies that there is no integral solution to $\mathcal{I}$ in which balls of radius $r_1$ cover at least $k_1-1$ points of $L_1$. Hence, $\sum_{v\in L_1}\mathrm{cov}_1(v)\leq k_1 -2$ separates $\{(\mathrm{cov}_1(v),\mathrm{cov}_2(v))\}_{v\in X}$ from $\mathcal{P}_{\mathcal{I}}$ which is then fed back to the ellipsoid algorithm. This finishes the proof of Lemma~\ref{algoforcontracted}.
\end{proof}
\section{Conclusion}
In this paper we developed a bottom-up framework for NU$k$C that allowed us to reduce from $t$-NU$k$C to Robust $(t-1)$-NU$k$C.
A constant approximation for 3-NU$k$C follows as a corollary from the work of CN~\cite{DBLP:conf/ipco/ChakrabartyN21}.
This bottom-up approach when applied to Robust 2-NU$k$C a gives an alternate presentation of the algorithm of ~\cite{DBLP:conf/ipco/ChakrabartyN21} with the same guarantees but has a simpler analysis. 
Thus, further progress is made towards obtaining a constant approximation for $t$-NU$k$C when $t$ is a constant. We believe that this bottom-up approach is a promising approach for proving this conjecture in full generality, which remains an exciting open problem.
\bibliographystyle{siam} 
\bibliography{refs}
\appendix
\section{Dynamic programming for laminar instances}\label{laminardpappendix}
In this section we describe the details of the dynamic programming algorithm that can be used to solve a laminar Robust 2-NU$k$C instance. Recall that a Robust 2-NU$k$C instance $\mathcal{I} = ((X,d),(k_1,r_1),(k_2,r_2),m)$ is said to be \emph{laminar} if we are given sets $L_1,L_2\subseteq X$ such that the following are satisfied.  
\begin{enumerate}
    \item The $k_i$ balls of radius $r_i$ are only allowed to be centered at points in $L_i$, $ i\in \{1,2\}$;
    \item $\mathcal{B}(u,r_i)\cap \mathcal{B}(v,r_i) = \emptyset$ for all $u,v\in L_i$, $i\in \{1,2\}$;
    \item $C(v)\cap C(v')=\emptyset$ for all $v,v' \in L_1$, where $C(v) = \{u\in L_2: \mathcal{B}(v,r_1)\cap \mathcal{B}(u,r_2)\neq \emptyset\}$.
\end{enumerate}
For any $v\in L_1$ we refer to the set $C(v)$ as the children of $v$.
Let $v_1,v_2,\ldots,v_{|L_1|}$ be an enumeration of points in $L_1$, and for each $v_i$ let $u_{1,i},u_{2,i},\ldots,u_{|C(v_i)|,i}$ be an enumeration of the points in $C(v_i)$. See Figure~\ref{fig:laminarexample} for an illustration of a laminar instance. In our dynamic programming algorithm, we will have a local table and a global table.
Since the sub-instance $\mathcal{B}(v_i,r_1)\cup (\cup_{j=1}^{|C(v_i)|}\mathcal{B}(u_{j,i},r_2))$ is completely disjoint from the rest of the instance for all $i$, we will have a local table that computes solutions to this sub instance. The global table will then be used to combine solutions to these sub instances.

The local table $\text{\sc local}$ will be indexed by $i,bit,j,k'_2,m'$. We set $\text{\sc local}[i,bit,j,k'_2,m']=1$ if and only if there is a solution that opens $k'_2$ balls of radius $r_2$ centered at points in $\{u_{1,i},\ldots,u_{j,i}\}$ and these balls cover at least $m'$ points, assuming a ball of radius $r_1$ is open at $v_i$ if and only if $bit=1$. We will compute this table bottom-up from $j=1$ to $|C(v_i)|$ for a fixed $i$ and $bit$, each time computing the table entry for all values of $k'_2, m'$ for each $j$. Now by laminarity of the instance, balls of radius $r_2$ centered at points in $C(v_i)$ are disjoint. Thus we can express $\text{\sc local}[i,bit,j,\ldots]$ in terms of $\text{\sc local}[i,bit,j-1,\ldots]$ easily by enumerating whether there is a ball of radius $r_2$ open at $u_{j,i}$ or not, and if yes then reducing the budget of points needed to be covered by the amount we gain by a ball of radius $r_2$ centered at $u_{j,i}$. We formalize this and state the recurrence with the base cases below.
\begin{align*}
    \text{\sc local}[i,bit,0,0,0] &= 1  \\
    \text{\sc local}[i,bit,0, k'_2,m'] &= 0, \quad \mbox{for any other } m',k'_2 \neq 0 \\
    \text{\sc local}[i,bit,j, k'_2,m'] &= \begin{cases}
            & \text{\sc local}[i,bit, j-1,k'_2,m'] \text{ if $u_{j,i}$ not taken,}
            \vspace{0.4cm}
            \\
            &\text{\sc local}[i,1,j-1, k'_2-1,m'-|(\mathcal{B}(u_{j,i},r_2)\setminus \mathcal{B}(v_i,r_1)| \\&\hspace{3cm}\quad \text{ if $bit=1$ and $u_{j,i}$ taken,} \vspace{0.4cm}
            \\
            &\text{\sc local}[i,0,j-1, k'_2 -1,m'-|\mathcal{B}(u_{j,i},r_2)|]\\&\hspace{3cm}\quad \text{ if $bit=0$ and $u_{j,i}$ taken.}
            \vspace{0.4cm}
        \end{cases}
\end{align*}
The recurrence for the global table $\text{\sc global}$ is described next. This table is indexed by $i,k'_1,k'_2,m'$ as follows. $\text{\sc global}[i,k'_1,k'_2,m'] = 1$ if there is a solution that opens $k'_1$ balls of radius $r_1$ centered at points in $\{v_j\}_{j=1}^{i}$ and $k'_2$ balls of radius $r_2$ centered at points in $\cup_{j=1}^{i}C(v_j)$ and covers $m'$ points.

Now by laminarity of the instance, balls of radius $r_1$ around points in $L_1$ are disjoint, $C(v)$ for all $v\in L_1$ are disjoint, and balls of radius $r_2$ around points in $\cup_{j=1}^{i}C(v_j)$ are disjoint as well. Thus we can express $\text{\sc global}[i,k'_1,k'_2,m']$ in terms of $\text{\sc global}[i-1,\ldots]$ and $\text{\sc local}[v_i,|C(v_i)|,\ldots]$ by enumerating how many balls out of $k'_1$ are open in $\{v_{j}\}_{j=1}^{i-1}$, how many in $v_i$, and similarly enumerate how the other parameters $k'_2,m'$ are divided. Formally, we set $\text{\sc global}[i,k'_1,k'_2,m']$ to 1 if any one of the following are 1, otherwise we set it to 0.
\begin{itemize}
    \item $\text{\sc local}[v_1,1,|C(v_i)|,k'_2,m' - |\mathcal{B}(v_i,r_1)|] \text{ if $i=1$ and $k'_1=1$ }$
    \item $\text{\sc local}[v_1,0,|C(v_i)|,k'_2,m'] \text{ if $i=1$ and $k'_1=0$}$
    \item $\bigvee_{k''_2, m''} \Bigg[\text{\sc global}[i-1,k'_1-1,k'_2-k''_2,m'-m''] \wedge\text{ } \text{\sc local}[v_i,1,|C(v_i)|,k''_2,m''-|\mathcal{B}(v_i,r_1)|] \Bigg]$ \\ $\text{($v_i$ taken)}$
    \item $\bigvee_{k''_2, m''} \Bigg[ \text{\sc global}[i-1,k'_1,k'_2-k''_2,m'-m'']
        \wedge\text{ } \text{\sc local}[v_i,0,|C(v_i)|,k''_2,m''] \Bigg]$\\$\text{($v_i$ not taken)}$
\end{itemize}
After we have computed the $\text{\sc local}$ table for all parameter values, we can compute the $\text{\sc global}$ table in a bottom-up fashion from $i=1$ to $|L_1|$, each time computing $\text{\sc global}[i,\ldots]$ for all values of the other parameters. If the instance is feasible then consider a feasible solution of the problem. Let $k_{1i},k_{2i}$ be the number of balls of radius $r_1$ and $r_2$ in this solution centered at points in $\{v_{j}\}_{j=1}^{i}$ and $\cup_{j=1}^{i}C(v_{j})$ respectively and let $m_{i}$ be the number of points covered by them. Then the global table value $\text{\sc global}[i,k_{1i},k_{2i},m_{i}]$ will be feasible and set to 1 for all $1\leq i\leq |L|$. Thus  $\text{\sc global}[|L_1|,k_1,k_2,m]$ will be set to 1. We can also remember the choices made while computing both the local and global tables bottom-up to also find a solution to the problem, if it exists. 
\end{document}